\numberwithin{equation}{section}
\theoremstyle{plain}
\newtheorem{thm}{Theorem}[section]
\newtheorem{theorem}[thm]{Theorem}
\newtheorem{lemma}[thm]{Lemma}
\newtheorem{proposition}[thm]{Proposition}
\theoremstyle{remark}
\theoremstyle{definition}
\newcommand{\bN}{\mathbb{N}}
\newcommand{\cF}{\mathcal{F}}
\newcommand{\cY}{\mathcal{Y}}
\newcommand{\cZ}{\mathcal{Z}}
\newcommand{\ghat}{{\widehat{g}}}
\newcommand{\uhat}{\widehat{u}}
\newcommand{\scal}{\mathrm{Scal}}
\newcommand{\definedas}{\mathrel{\raise.095ex\hbox{\rm :}\mkern-5.2mu=}}
\newcommand{\ubar}{\overline{u}}
\DeclareMathOperator{\vol}{Vol}
\date{14th November, 2019}
\keywords{Einstein constraint equations, non-constant mean curvature, conformal method,
Lichnerowicz equation, compact manifold, prescribed scalar curvature}
\subjclass[2000]{53C21 (Primary), 35Q75, 53C80, 83C05 (Secondary)}
\begin{document}


\author[R. Gicquaud]{Romain Gicquaud}
\address[R. Gicquaud]{Institut Denis Poisson \\ UFR Sciences et Technologie \\ Facult\'e de Tours \\ Parc de Grandmont\\ 37200 Tours \\ FRANCE}
\email{\href{mailto: R. Gicquaud <Romain.Gicquaud@lmpt.univ-tours.fr>}{romain.gicquaud@lmpt.univ-tours.fr}}

\title[On the Lichnerowicz equation]{Existence of solutions to the
Lichnerowicz equation: a new proof}

\begin{abstract}
We provide a complete study of existence and uniqueness of solutions to
the Lichnerowicz equation in general relativity with arbitrary mean curvature.

\end{abstract}

\date{\today}
\maketitle
\tableofcontents

\section{Introduction}
The Lichnerowicz equation is an elliptic equation that appears in the
construction of initial data in general relativity. In the setting of this
note, let $(M, g)$ be a compact Riemannian manifold of dimension $n > 2$,
$g \in W^{2, p}$, $p > n/2$, and assume given two functions
$\tau \in L^{2p}$ and $A \in L^{2p}$. The Lichnerowicz equation has a
positive function $\phi$ as unknown and reads
\begin{equation}\label{eqLichnerowicz}
 -\frac{4(n-1)}{n-2} \Delta \phi + \scal~\phi = -\frac{n-1}{n} \tau^2 \phi^{N-1} + \frac{A^2}{\phi^{N+1}},
\end{equation}
where $\scal$ is the scalar curvature of $g$ and
$\displaystyle N \definedas \frac{2n}{n-2}$.

We refer the reader to \cite{BartnikIsenberg, ChoquetBruhat} for an
overview of the context in which this equation appears. It has attracted
attention a couple of decades ago culminating at the
classification of constant mean curvature initial data by J. Isenberg in
\cite{Isenberg}. Recently, important efforts have been put in
constructing non-constant mean curvature initial data, see
\cite{HNT1, HNT2, MaxwellNonCMC, DahlGicquaudHumbert} and
\cite{GicquaudNgo}.

The main aim of this note is to give a short proof of
existence/non-existence of solutions to \eqref{eqLichnerowicz} in the
generic case $A \not\equiv 0$. This result is well-known to a
large extent, see e.g. \cite[Theorem 1]{MaxwellNonCMC}.
The main novelty here is that there is no need
to give separate proofs according to the sign of the Yamabe quotient of
$(M, g)$. The particular case $A \equiv 0$ is the prescribed scalar
curvature equation which is similar to the problem addressed in
\cite{Rauzy,Ouyang91,Ouyang92,Tang}, see also
\cite{DiltsMaxwell,GicquaudScalar}. We will study it in Section
\ref{secPSC}.

This paper is a byproduct of the techniques developed in
\cite{DiltsMaxwell,GicquaudScalar}.\\

The outline of this paper is as follows. In Section \ref{secLocalYamabe},
we introduce the main tool to discrimitate which function $\tau$ lead to
existence of solutions to \eqref{eqLichnerowicz}.
In Section \ref{secLich}, we study the case $A \not\equiv 0$. The main
result of this section is Theorem \ref{thmCompact} which is the main
result of the paper. Section \ref{secPSC} is devoted to the case
$A \equiv 0$ which, as we indicated before, deserves a particular
treatment.\\

\noindent\textbf{Acknowledgements:} The author is grateful to
Marie-Fran\c coise Bidaut-V\'eron for useful comments on a preliminary
version of this article.

\section{Local Yamabe invariant and first conformal eigenvalue}
\label{secLocalYamabe}
For any measurable subset $V \subset M$, we define the space
\begin{equation}\label{eqDefF}
 \cF(V) \definedas  \{u \in W^{1, 2}, u \equiv 0\text{ a.e. on } M \setminus V\}
\end{equation}
of Sobolev functions vanishing outside $V$. This set is obviously reduced
to $\{0\}$ if $V$ has Lebesgue measure zero but there are larger $V$ with
$\cF(V) = \{0\}$, see for example \cite[Chapter 6]{AdamsHedberg}. Much of this
section is adapted from \cite{GicquaudScalar}.

For any $u \in W^{1, 2}$, we set
\begin{equation}\label{eqDefG}
 G_g(u) \definedas \int_M \left[\frac{4(n-1)}{n-2} |du|^2 + \scal~u^2\right] d\mu^g
\end{equation}
We also introduce, for any $u \in W^{1, 2}$, $u \not\equiv 0$,
the Rayleigh and the Yamabe quotients:
\begin{subequations}
\begin{align}
 Q^R_g(u) & \definedas G(u) / \|u\|_{L^2}^2,\nonumber\\
 Q^Y_g(u) & \definedas G(u) / \|u\|_{L^N}^2.\nonumber
\end{align}
\end{subequations}
With these definitions at hand, we introduce the local first conformal
eigenvalue $\lambda_g(V)$ and the local Yamabe invariant $\cY_g(V)$ of
any measurable subset $V \subset M$ as follows:
\begin{subequations}
\begin{align}
 \lambda_g(V) & \definedas \inf_{u \in \cF(V) \setminus \{0\}} Q^R_g(u),\nonumber\\
 \cY_g(V) & \definedas \inf_{u \in \cF(V) \setminus \{0\}} Q^Y_g(u).\nonumber
\end{align}
\end{subequations}
From the definition of an infimum, we have $\lambda_g(V) = \cY_g(V) = \infty$
if $\cF(V)$ is reduced to $\{0\}$.

\begin{proposition}\label{propSemiContinuity}
The functional $G$ defined in \eqref{eqDefG}
is sequentially weakly lower semi-continuous on $W^{1, 2}$:
for every weakly converging sequence $(u_k)_k$,
$\displaystyle
 u_k \rightharpoonup u_\infty,
$
we have
$\displaystyle
 \liminf_{k\to \infty} G(u_k) \geq G(u_\infty).
$
\end{proposition}

\begin{proof}
Note that $G_g$ can be decomposed as
\begin{equation}\label{eqG}
 G_g(u) = \frac{4(n-1)}{n-2} \int_M |du|^2 d\mu^g + \int_M \scal~u^2 d\mu^g.
\end{equation}
The first term is weakly lower semi-continuous with respect to
$u \in W^{1, 2}$ as a continuous non-negative quadratic form. For the second
one, we shall prove that, given a sequence $(u_k)_k$ in $W^{1, 2}$,
converging weakly to $u_\infty$, $u_k \rightharpoonup_{k\to \infty} u_\infty$,
we have
\[
 \int_M \scal~u_k^2 d\mu^g \to \int_M \scal~u_\infty^2 d\mu^g.
\]
To make the notation less cluttered, we denote the second term in
\eqref{eqG} as $S(u)$:
\[
 S(u) \definedas \int_M \scal~u^2 d\mu^g.
\]
Assume by contradiction that $(S(u_k))_k$ does not converge to
$S(u_\infty)$, there
exists an $\epsilon > 0$ such that, for an infinite number of integers
$k$, we have
\begin{equation}\label{eqH}
 |S(u_k) - S(u_\infty)| > \epsilon.
\end{equation}
Without loss of generality, we can assume that \eqref{eqH} holds for all
integer $k$ and also that $(u_k)_k$ converges strongly in $L^2$ to some
$\ubar_\infty \in L^2$ since the embedding $W^{1, 2} \hookrightarrow L^2$
is compact. Then we have $u_\infty \equiv \ubar_\infty$ a.e. Indeed, the
linear form
\[
 u \mapsto \int_M u (u_\infty - \ubar_\infty) d\mu^g
\]
is (strongly) continuous for the $L^2$-topology and, hence, for the
$W^{1, 2}$-topology. As a consequence,
\[
 \int_M u_\infty (u_\infty - \ubar_\infty) d\mu^g = \lim_{k \to \infty} \int_M u_k (u_\infty - \ubar_\infty) d\mu^g = \int_M \ubar_\infty (u_\infty - \ubar_\infty) d\mu^g,
\]
where the first equality holds by the $W^{1, 2}$-weak convergence of
$(u_k)_k$ to $u_\infty$ and the second one by the $L^2$-strong
convergence of $(u_k)_k$ to $\ubar_\infty$. Subtracting both equalities,
we get
\[
 \int_M |u_\infty - \ubar_\infty|^2 d\mu^g = 0,
\]
which proves that $u_\infty \equiv \ubar_\infty$ a.e.
Finally note that, since $(u_k)_k$ is weakly convergent in $W^{1, 2}$,
it is bounded and thus (by interpolation) converges in all $L^q$ spaces,
$q \in [2, N)$. Since $\scal \in L^p$, $p > n/2$, letting $q$ be such
that
$\displaystyle
  1 = \frac{1}{p} + \frac{2}{q}
$,
we have $q \in [2, N)$ and, by H\"older's inequality, $S$ is a bounded
quadratic form on $L^q$. In particular $S$ is continuous on $L^q$:
\[
 S(u_k) \to S(u_\infty).
\]
This contradicts \eqref{eqG}: $S$ is sequentially weakly continuous on
$W^{1, 2}$. This ends the proof of Proposition \ref{propSemiContinuity}.
\end{proof}

In what follows, we let $s > 0$ be the largest constant so that
\begin{equation}\label{eqSobolev}
 \|u\|_{W^{1, 2}}^2 \geq s \|u\|_{L^N}^2\quad\forall u \in W^{1, 2}.
\end{equation}

\begin{proposition}\label{propYamabe}
Given any measurable set $V \subset M$, $\lambda_g(V)$ and $\cY_g(V)$ have
the same sign (i.e. they are either both positive, both negative or both
zero).
\end{proposition}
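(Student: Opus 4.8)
The key observation is that both quotients are built from the same quadratic form $G_g$, so their signs are controlled by whether $G_g$ is positive on $\cF(V)\setminus\{0\}$ or takes a nonpositive value there. The plan is to show that each of the three sign conditions ($>0$, $<0$, $=0$) for $\lambda_g(V)$ is equivalent to the corresponding condition for $\cY_g(V)$, and to do this it suffices to compare the sets of functions on which $G_g$ is negative, zero, and positive.

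First I would dispose of the degenerate case: if $\cF(V) = \{0\}$, then by the convention stated in the excerpt both invariants equal $+\infty$, which we may regard as positive, so there is nothing to prove. Assume henceforth $\cF(V) \neq \{0\}$. The crucial elementary fact is that for any fixed $u \in \cF(V)\setminus\{0\}$, the denominators $\|u\|_{L^2}^2$ and $\|u\|_{L^N}^2$ are both strictly positive and finite (the latter by the Sobolev embedding $W^{1,2}\hookrightarrow L^N$), hence $Q^R_g(u)$ and $Q^Y_g(u)$ have the same sign — indeed the same sign as $G_g(u)$ itself. Therefore $\lambda_g(V) \geq 0$ if and only if $G_g(u)\geq 0$ for all $u\in\cF(V)$, which holds if and only if $\cY_g(V)\geq 0$; and $\lambda_g(V) > 0$ forces $G_g(u) \geq \lambda_g(V)\|u\|_{L^2}^2 > 0$ for all $u \neq 0$, but this alone does \emph{not} immediately give $\cY_g(V) > 0$ since the infimum could still be $0$.

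The main obstacle is precisely this last point: ruling out that $\lambda_g(V) > 0$ while $\cY_g(V) = 0$ (and symmetrically $\cY_g(V) > 0$ while $\lambda_g(V) = 0$). For the first, I would use Sobolev's inequality \eqref{eqSobolev}: if $\lambda_g(V) > 0$ then $G_g(u) \geq \lambda_g(V)\|u\|_{L^2}^2$, and combining with $G_g(u) = \frac{4(n-1)}{n-2}\|du\|_{L^2}^2 + S(u) \geq c\,\|du\|_{L^2}^2 + S(u)$ one can show $G_g(u) \gtrsim \|u\|_{W^{1,2}}^2 \geq s\|u\|_{L^N}^2$ — more carefully, write $G_g(u) = \theta G_g(u) + (1-\theta)G_g(u)$ and use the $\lambda_g(V)$-bound on one piece to absorb the (possibly negative) scalar curvature term in $S(u)$, obtaining a uniform lower bound $G_g(u)\geq \delta\|u\|_{W^{1,2}}^2$ for small $\delta>0$; then \eqref{eqSobolev} gives $Q^Y_g(u)\geq \delta s > 0$, so $\cY_g(V)>0$. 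For the reverse direction, if $\cY_g(V) > 0$ then $G_g(u)\geq \cY_g(V)\|u\|_{L^N}^2 > 0$ for all $u \in \cF(V)\setminus\{0\}$, so $\lambda_g(V)\geq 0$; and $\lambda_g(V)=0$ would produce a sequence $u_k$ with $\|u_k\|_{L^2}=1$ and $G_g(u_k)\to 0$, which (again absorbing the scalar term as above, since $G_g(u_k)$ bounded and $\|u_k\|_{L^2}=1$ force $\|u_k\|_{W^{1,2}}$ bounded) would be bounded in $W^{1,2}$; extracting a weakly convergent subsequence $u_k\rightharpoonup u_\infty$, Proposition \ref{propSemiContinuity} gives $G_g(u_\infty)\leq 0$, while $u_\infty \in \cF(V)$ since $\cF(V)$ is weakly closed (it is a closed subspace of $W^{1,2}$) and $\|u_\infty\|_{L^2}=1$ by compactness of $W^{1,2}\hookrightarrow L^2$, so $u_\infty \neq 0$ — contradicting $\cY_g(V)>0$. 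The three cases ($>0$, $<0$, $=0$) then follow by combining these equivalences: both are nonnegative together, both are positive together, hence both are zero together, and the remaining case (both negative) is what is left.
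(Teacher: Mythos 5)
Your argument is correct, but it takes a genuinely different route from the paper's. The paper proves the three implications $\cY_g(V)<0\Rightarrow\lambda_g(V)<0$ (trivial), $\cY_g(V)>0\Rightarrow\lambda_g(V)>0$, and $\cY_g(V)=0\Rightarrow\lambda_g(V)=0$, which already give the trichotomy. For the positive case it uses nothing more than H\"older's inequality on $V$, namely $\|u\|_{L^2}^2\le \vol_g(V)^{2/n}\|u\|_{L^N}^2$ for $u\in\cF(V)$, so that $\lambda_g(V)\ge \cY_g(V)/\vol_g(V)^{2/n}$ in one line; the compactness machinery (interpolation/Young absorption of the scalar curvature term, Rellich, and Proposition \ref{propSemiContinuity}) is reserved for the zero case, applied to a Yamabe-minimizing sequence normalized by $\|u_k\|_{L^N}=1$, where the delicate point is to show that the weak limit is nonzero (handled via \eqref{eqEstimate} together with \eqref{eqSobolev} and strong $L^q$ convergence). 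You instead prove the positive-case equivalence in both directions: $\lambda_g(V)>0\Rightarrow\cY_g(V)>0$ by the absorption argument yielding $G_g(u)\ge\delta\|u\|_{W^{1,2}}^2$ and then \eqref{eqSobolev}, and $\cY_g(V)>0\Rightarrow\lambda_g(V)>0$ by contradiction, applying the compactness argument to a Rayleigh-minimizing sequence normalized in $L^2$; the zero case then follows formally from these together with the (trivial) negative case. Each organization has a payoff: your $L^2$ normalization makes the nonvanishing of the weak limit automatic, since Rellich gives $\|u_\infty\|_{L^2}=1$ directly --- precisely the step the paper has to work for --- while the paper's volume/H\"older trick makes the positive case quantitative and essentially free, whereas your $\lambda_g(V)>0\Rightarrow\cY_g(V)>0$ direction requires the full interpolation/absorption estimate. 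Note that your absorption step is only sketched (split $G_g=\theta G_g+(1-\theta)G_g$, bound the scalar term by $\|\scal\|_{L^p}\|u\|_{L^q}^2$ with $1=1/p+2/q$, then Young plus \eqref{eqSobolev}), but it is the same standard estimate the paper itself carries out in the proofs of Proposition \ref{propYamabe} and Lemma \ref{lmEigenvalue}, so it goes through; the same remark applies to your boundedness claim for the Rayleigh-minimizing sequence.
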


\begin{proof}
We can assume, without loss of generality, that $\cF(V) \neq \{0\}$ for
otherwise $\cY_g(V) = \lambda_g(V) = \infty$. If $\cY_g(V) < 0$, there
exists $u \in \cF(V)$ such that $G_g(u) < 0$ so $\lambda_g(V) < 0$. 
Assume now that $\cY_g(V) > 0$, then, for all
$u \in \cF(V) \setminus\{0\}$, we have
\[
 Q^R_g(u) = \frac{G_g(u)}{\|u\|_{L^2}^2} \geq \frac{G_g(u)}{\|u\|_{L^N}^2 \vol_g(V)^{2/n}} \geq \frac{\cY_g(V)}{\vol_g(V)^{2/n}}.
\]
We conclude that
\[
 \lambda_g(V) \geq \frac{\cY_g(V)}{\vol_g(V)^{2/n}} > 0.
\]
All we have to show now is that, if $\cY_g(V) = 0$, we have
$\lambda_g(V) = 0$. Assume for the rest of the proof that $\cY_g(V) = 0$.
If $\lambda_g(V)$ were negative, there would exits $u \in \cF(V)$ such
that $G_g(u) < 0$ so $\cY_g(V) \leq Q^Y_g(u) < 0$. This proves that
$\lambda_g(V) \geq 0$. Since $\cY_g(V) = 0$, there exists a sequence
of functions $u_k \in \cF(V)$ such that $Q^Y_g(u_k) \to 0$.
Without loss of generality, we can assume that $\|u_k\|_{L^N} = 1$
so $G_g(u_k) \to 0$.

Let $q$ be as in the proof of the previous proposition. Then we have
that
\begin{align*}
G(u_k)
 &\geq \frac{4(n-1)}{n-2} \|u_k\|_{W^{1, 2}}^2 - \frac{4(n-1)}{n-2} \|u_k\|_{L^2}^2 - \left\|\scal\right\|_{L^p} \|u_k\|_{L^q}^2\\
 &\geq \frac{4(n-1)}{n-2} \|u_k\|_{W^{1, 2}}^2 - \frac{4(n-1)}{n-2} \vol_g(V)^{1-2/q} \|u_k\|_{L^q}^2 - \left\|\scal\right\|_{L^p} \|u_k\|_{L^q}^2.
\end{align*}
Hence, setting
$C = \frac{4(n-1)}{n-2} \vol_g(V)^{1-2/q} + \left\|\scal\right\|_{L^p}$,
we arrive at
\begin{equation}\label{eqEstimate}
 G_g(u_k) + C \|u_k\|_{L^q}^2 \geq \frac{4(n-1)}{n-2} \|u_k\|_{W^{1, 2}}^2.
\end{equation}
Since $q < N$, we have that $\|u_k\|_{W^{1, 2}}$ is bounded independently
of $k$. Arguing as in the proof of the previous proposition, we can
assume that $(u_k)_k$ converges weakly in $W^{1, 2}$ and strongly in
$L^2$ to some $u_\infty \in \cF(V)$.
Combining Equation \eqref{eqEstimate} with the Sobolev estimate
\eqref{eqSobolev}, we get
\[
 G_g(u_k) + C \|u_k\|_{L^q}^2 \geq \frac{4(n-1)}{n-2} s \|u_k\|_{L^N}^2 = \frac{4(n-1)}{n-2} s.
\]
Passing to the limit as $k$ goes to infinity, we conclude that
$\|u_\infty\|_{L^q} > 0$, i.e. $u_\infty \not\equiv 0$.
By the lower semicontinuity of
$G_g$, we have $G_g(u_\infty) \leq \liminf_{k \to \infty} G_g(u_k) = 0$.
Since $G_g(u_\infty) \geq 0$, we have $G_g(u_\infty) = 0$. We have proven
that
\[
 0 \leq \lambda_g(V) \leq Q^R_g(u_\infty) = 0,
\]
i.e. $\lambda_g(V) = 0$. This concludes the proof of the fact that
$\cY_g(V)$ and $\lambda_g(V)$ have the same sign.
\end{proof}

The reason why it is more convenient to work with $\cY_g(V)$ than with
$\lambda_g(V)$ is given by the following proposition.

\begin{proposition}\label{propConfInvariance}
Assume that $g$ and $h$ are two conformally related metrics,
$h = \phi^{N-2} g$, for some positive function $\phi \in W^{2, p}$. Then
for any measurable $V$ we have
\[
 \cY_g(V) = \cY_h(V).
\]
\end{proposition}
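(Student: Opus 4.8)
The plan is to produce an explicit bijection of $\cF(V)$ onto itself under which the Yamabe quotient for $h$ is transported to the Yamabe quotient for $g$, so that the two infima defining $\cY_g(V)$ and $\cY_h(V)$ agree term by term. Note first that $\cF(V)$ does not really depend on the choice of metric in the conformal class: since $M$ is compact and $\phi \in W^{2,p}$ with $p > n/2$ is continuous and strictly positive, $\phi$ and $\phi^{-1}$ are bounded, so $W^{1,2}$ computed with $g$ or with $h$ is the same space (with equivalent norms). Moreover $\phi^{-1} \in W^{2,p}$, because $W^{2,p}$ is stable under composition with smooth functions when $p > n/2$. The map $T\colon u \mapsto \phi u$ then sends $W^{1,2}$ into itself: $d(\phi u) = u\,d\phi + \phi\,du \in L^2$ since $\phi$ is bounded, $u \in L^N$, and $d\phi \in W^{1,p} \hookrightarrow L^n$ (the embedding $W^{1,p}\hookrightarrow L^n$ holds precisely for $p \geq n/2$), so $u\,d\phi \in L^2$ by Hölder; applying the same argument to $\phi^{-1}$ produces the inverse. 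Since $\phi > 0$, multiplication by $\phi$ preserves vanishing a.e. outside $V$, so $T$ restricts to a bijection of $\cF(V)\setminus\{0\}$ onto itself.

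Next I would record two transformation identities. From $N - 2 = \tfrac{4}{n-2}$ one gets $\tfrac n2(N-2) = N$, hence $d\mu^h = \phi^N\,d\mu^g$, and therefore, using $\phi > 0$,
\[
 \|u\|_{L^N_h}^N = \int_M |u|^N\,d\mu^h = \int_M |\phi u|^N\,d\mu^g = \|T(u)\|_{L^N_g}^N .
\]
The main identity is that $G_h(u) = G_g(\phi u)$ for every $u \in W^{1,2}$. To prove it, combine $|du|_h^2\,d\mu^h = \phi^{2-N}|du|_g^2 \cdot \phi^N\,d\mu^g = \phi^2|du|_g^2\,d\mu^g$ with the conformal change of scalar curvature
\[
 \scal_h = \phi^{1-N}\Bigl(-\tfrac{4(n-1)}{n-2}\,\Delta_g\phi + \scal_g\,\phi\Bigr),
\]
which gives $\scal_h\,u^2\,d\mu^h = \bigl(-\tfrac{4(n-1)}{n-2}\,\phi u^2\,\Delta_g\phi + \scal_g\,\phi^2 u^2\bigr)\,d\mu^g$. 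Integrating $-\int_M \phi u^2\,\Delta_g\phi\,d\mu^g$ by parts yields $\int_M\bigl(u^2|d\phi|_g^2 + 2\phi u\,\langle du,d\phi\rangle_g\bigr)\,d\mu^g$, and since $|d(\phi u)|_g^2 = \phi^2|du|_g^2 + u^2|d\phi|_g^2 + 2\phi u\,\langle du,d\phi\rangle_g$, the three contributions recombine into $\int_M\bigl[\tfrac{4(n-1)}{n-2}|d(\phi u)|_g^2 + \scal_g\,(\phi u)^2\bigr]\,d\mu^g = G_g(\phi u)$.

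Granting these, the conclusion is immediate: for every $u \in \cF(V)\setminus\{0\}$,
\[
 Q^Y_h(u) = \frac{G_h(u)}{\|u\|_{L^N_h}^2} = \frac{G_g(\phi u)}{\|\phi u\|_{L^N_g}^2} = Q^Y_g(T(u)),
\]
and taking the infimum over the bijection $T$ of $\cF(V)\setminus\{0\}$ gives $\cY_h(V) = \cY_g(V)$.

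The delicate step is justifying $G_h(u) = G_g(\phi u)$ under the weak hypotheses $g,\phi \in W^{2,p}$, $u \in W^{1,2}$: the scalar curvature formula holds only in the $L^p$ sense, and the integration by parts of $\int_M \phi u^2\,\Delta_g\phi$ must be legitimated since $\Delta_g\phi$ is merely in $L^p$. I expect to handle this by first proving the identity pointwise in the classical setting, for smooth positive $\phi$ and smooth $u$, and then passing to the limit: approximate $\phi$ in $W^{2,p}$ by smooth positive functions (using the lower bound on $\phi$ to stay positive) and $u$ in $W^{1,2}$ by smooth functions, and check convergence of each term via the embeddings $W^{2,p}\hookrightarrow C^0 \cap W^{1,p^\ast}$ with $p^\ast > n$ and $W^{1,2}\hookrightarrow L^N$, which keep all the products involved bounded in the appropriate dual Lebesgue spaces. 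Since only the quadratic-form identity is needed, it suffices to establish it on all of $W^{1,2}$ and then restrict it to $\cF(V)$.
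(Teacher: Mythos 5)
Your proposal is correct and follows essentially the same route as the paper: the identities $G_h(u)=G_g(\phi u)$ (via the conformal transformation of $\scal$ plus integration by parts) and $\|u\|_{L^N_h}=\|\phi u\|_{L^N_g}$, together with the observation that multiplication by $\phi$ is a bijection of $\cF(V)$, so the two infima coincide. The only difference is that you additionally spell out the density/approximation argument justifying the formal computation under the $W^{2,p}$ hypotheses, which the paper leaves implicit.
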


\begin{proof}
The proof is a simple calculation. Given any $u \in W^{1, 2}$, we
have
\begin{align*}
G_h(u)
 &= \int_M \left[\frac{4(n-1)}{n-2} |du|_h^2 + \scal^h~u^2\right] d\mu^h\\
 &= \int_M \left[\frac{4(n-1)}{n-2} \phi^{2-N} |du|_g^2 + \left(-\frac{4(n-1)}{n-2} \Delta^g \phi + \scal^g~\phi\right) \phi^{1-N} u^2\right] \phi^N d\mu^g\\
 &= \int_M \left[\frac{4(n-1)}{n-2} \phi^2 |du|_g^2 + \left(-\frac{4(n-1)}{n-2} \Delta^g \phi + \scal^g~\phi\right) \phi u^2\right] d\mu^g\\
 &= \int_M \left[\frac{4(n-1)}{n-2} \left(\phi^2 |du|_g^2 - (\phi \Delta^g \phi) u^2\right) + \scal^g~(\phi u)^2\right] d\mu^g\\
 &= \int_M \left[\frac{4(n-1)}{n-2} \left(\phi^2 |du|_g^2 + \<d\phi, d(\phi u^2)\>_g\right) + \scal^g~(\phi u)^2\right] d\mu^g\\
 &= \int_M \left[\frac{4(n-1)}{n-2} \left(\phi^2 |du|_g^2 + u^2 |d\phi|^2_g + 2 \<\phi d\phi, u du\>_g\right) + \scal^g~(\phi u)^2\right] d\mu^g\\
 &= \int_M \left[\frac{4(n-1)}{n-2} |d(\phi u)|_g^2 + \scal^g~(\phi u)^2\right] d\mu^g\\
 &= G_g(\phi u).
\end{align*}
Similarly,
\[
 \|u\|_{L^N_h} = \left(\int_M u^N d\mu^h\right)^{1/N} = \left(\int_M u^N \phi^N d\mu^g\right)^{1/N} = \|\phi u\|_{L^N_g}.
\]
So
\[
 Q^\cY_h(u) = Q^\cY_g(\phi u).
\]
Since $\phi$ is bounded away from zero, multiplication by $\phi$ defines
an automorphism of $\cF(V)$. Hence,
\[
 \cY_g(V) = \inf_{u \in \cF(V)} Q^\cY_g(u) = \inf_{u \in \cF(V)} Q^\cY_g(\phi u) = \inf_{u \in \cF(V)} Q^\cY_h(u) = \cY_h(V).
\]
\end{proof}

\section{Existence of solutions to the Lichnerowicz equation}
\label{secLich}

\begin{theorem}\label{thmCompact}
Let $(M, g)$ be a compact Riemannian manifold with $g \in W^{2, p}$,
$p > n/2$. Assume that $\tau \in L^{2p}$ is given. Then the following
statements are equivalent:
\begin{enumerate}
\item\label{itAllA} There exists a solution to \eqref{eqLichnerowicz}
for all $A \in L^{2p}$, $A \not\equiv 0$
 \item\label{itSomeA} There exists a solution to \eqref{eqLichnerowicz} for at least one
$A \in L^{2p}$, $A \not\equiv 0$,
 \item\label{itYamabe} The set $Z = \tau^{-1}(0)$ satisfies $\cY_g(Z) > 0$.
\end{enumerate}
Further, the solution to \eqref{eqLichnerowicz}, when it exists, is
unique unless $\cY_g(M) = 0$ and $\tau, A \equiv 0$ for which all solutions
are proportional one to another.
\end{theorem}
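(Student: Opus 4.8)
The plan is to prove the cycle $(\ref{itAllA})\Rightarrow(\ref{itSomeA})\Rightarrow(\ref{itYamabe})\Rightarrow(\ref{itAllA})$ and then the uniqueness statement. The implication $(\ref{itAllA})\Rightarrow(\ref{itSomeA})$ is immediate, and I expect essentially all the work to sit in $(\ref{itYamabe})\Rightarrow(\ref{itAllA})$ — more precisely, in a single coercivity estimate.

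\emph{The implication $(\ref{itSomeA})\Rightarrow(\ref{itYamabe})$.} Given a solution $\phi$ of \eqref{eqLichnerowicz} for some $A\not\equiv0$, I would pass to $h=\phi^{N-2}g$. The left‑hand side of \eqref{eqLichnerowicz} equals $\phi^{N-1}\scal^h$ by the transformation law for the scalar curvature (as in the proof of Proposition \ref{propConfInvariance}), so
\[
 \scal^h=-\frac{n-1}{n}\,\tau^2+\frac{A^2}{\phi^{2N}}\,.
\]
Since $\phi$ is continuous and positive and $\tau\equiv0$ on $Z$, one has $\scal^h\geq0$ a.e.\ on $Z$, hence $G_h(u)\geq\frac{4(n-1)}{n-2}\int_M|du|_h^2\,d\mu^h\geq0$ for every $u\in\cF(Z)$, i.e.\ $\cY_h(Z)\geq0$. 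Suppose $\cY_h(Z)=0$. By Proposition \ref{propYamabe}, $\lambda_h(Z)=0$, so there are $u_k\in\cF(Z)$ with $\|u_k\|_{L^2_h}=1$ and $G_h(u_k)\to0$; the displayed inequality gives $\|du_k\|_{L^2_h}\to0$, so a weak $W^{1,2}$‑limit $u_\infty$ is a non‑zero constant, which forces $Z=M$ up to a null set; but then $\int_Z A^2\phi^{-2N}u_\infty^2\,d\mu^h=0$ forces $A\equiv0$, a contradiction. Hence $\cY_h(Z)>0$, and $\cY_g(Z)>0$ by Proposition \ref{propConfInvariance}.

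\emph{The implication $(\ref{itYamabe})\Rightarrow(\ref{itAllA})$.} I would use sub‑ and super‑solutions: given $W^{2,p}$ barriers $0<\phi_-\leq\phi_+$ for \eqref{eqLichnerowicz}, a solution in between is produced by the standard monotone iteration (see e.g.\ \cite{MaxwellNonCMC}). Since $A\in L^{2p}$ is unbounded, constant barriers do not work, so I would manufacture both barriers from the positive solution of the auxiliary equation
\begin{equation}\label{eqAux}
 -\frac{4(n-1)}{n-2}\Delta\theta+\scal~\theta+\frac{n-1}{n}\tau^2\theta^{N-1}=A^2 ,
\end{equation}
obtained by minimising over $W^{1,2}$ the functional
\[
 F(\theta)=\tfrac12\,G_g(\theta)+\frac{n-1}{nN}\int_M\tau^2|\theta|^N\,d\mu^g-\int_M A^2\theta\,d\mu^g .
\]
Weak lower semi‑continuity of $F$ follows from Proposition \ref{propSemiContinuity} (the $G_g$ term), Fatou's lemma (the $\tau^2|\theta|^N$ term, which has the favourable sign) and the compact embedding $W^{1,2}\hookrightarrow L^{p/(p-1)}$ with $p/(p-1)<N$ together with $A^2\in L^p$ (the linear term). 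The crucial point — where $(\ref{itYamabe})$ enters and the only genuinely non‑routine step — is coercivity of $F$: if a sequence $\theta_k$ violated it, then, normalising $w_k=\theta_k/\|\theta_k\|_{W^{1,2}}$ and examining $F(\theta_k)/\|\theta_k\|_{W^{1,2}}^2\to0$, one gets $\int_M\tau^2|w_k|^N\to0$ and $\limsup_k G_g(w_k)\leq0$; a weak limit $w_\infty$ then lies in $\cF(Z)$ (Fatou) with $G_g(w_\infty)\leq0$ (Proposition \ref{propSemiContinuity}), so $w_\infty\equiv0$ because $\cY_g(Z)>0$, and then $\int_M\scal~w_k^2\to0$ forces $\|dw_k\|_{L^2}\to0$, contradicting $\|w_k\|_{W^{1,2}}=1$. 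A minimiser $\theta_*$ thus exists; replacing it by $|\theta_*|$ only lowers $F$, so $\theta_*\geq0$; it solves \eqref{eqAux} weakly, hence $\theta_*\in W^{2,p}\hookrightarrow C^0$ by elliptic regularity, and $\theta_*>0$ by the strong maximum principle (this uses $A^2\not\equiv0$). Finally, comparing \eqref{eqAux} with \eqref{eqLichnerowicz} and using $C^{N-1}\geq C$ for $C\geq1$ together with $0<\min_M\theta_*\leq\theta_*\leq\max_M\theta_*$, one checks that $C\theta_*$ is a super‑solution of \eqref{eqLichnerowicz} for $C$ large and $\delta C\theta_*$ a sub‑solution for $\delta>0$ small, with $\delta C\theta_*\leq C\theta_*$ since $\delta<1$. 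I expect the coercivity estimate to be the main difficulty; it is also the conceptual payoff, being exactly what removes the usual case distinction on the sign of $\cY_g(M)$.

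\emph{Uniqueness.} Let $\phi_1,\phi_2$ be positive solutions, set $\tilde g=\phi_1^{N-2}g$ and $v=\phi_2/\phi_1\in W^{2,p}$. Using $\scal^{\tilde g}=-\tfrac{n-1}{n}\tau^2+A^2\phi_1^{-2N}$ (the identity above applied to $\phi_1$) and the conformal covariance $-\tfrac{4(n-1)}{n-2}\Delta_g(\phi_1 v)+\scal~\phi_1 v=\phi_1^{N-1}\bigl(-\tfrac{4(n-1)}{n-2}\Delta_{\tilde g}v+\scal^{\tilde g}v\bigr)$, the equation for $\phi_2$ becomes
\[
 -\frac{4(n-1)}{n-2}\Delta_{\tilde g}v=-\frac{n-1}{n}\tau^2\bigl(v^{N-1}-v\bigr)+\frac{A^2}{\phi_1^{2N}}\bigl(v^{-N-1}-v\bigr) .
\]
Multiplying by $v-1$ and integrating over $(M,\tilde g)$ makes the left side $\frac{4(n-1)}{n-2}\int_M|dv|_{\tilde g}^2\,d\mu^{\tilde g}\geq0$, while $(v^{N-1}-v)(v-1)\geq0$ and $(v^{-N-1}-v)(v-1)\leq0$ for every $v>0$ make the right side $\leq0$; hence both vanish. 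Thus $v$ is constant, and if $v\equiv c\neq1$ the strict signs of $(c^{N-1}-c)(c-1)$ and $(c^{-N-1}-c)(c-1)$ force $\tau\equiv A\equiv0$. So $\phi_1=\phi_2$ unless $\tau\equiv A\equiv0$, in which case \eqref{eqLichnerowicz} reads $-\frac{4(n-1)}{n-2}\Delta\phi+\scal~\phi=0$, which has a positive solution only when $\lambda_g(M)=0$, equivalently $\cY_g(M)=0$ by Proposition \ref{propYamabe}, and then every solution is a positive multiple of the first eigenfunction — precisely the stated exceptional case. The only delicate point here is the sign bookkeeping for the two factors above; everything else is routine.
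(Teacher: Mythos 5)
Your proposal is correct, and in both substantive steps it takes a genuinely different route from the paper. For \ref{itSomeA}$\Rightarrow$\ref{itYamabe} you use the transformed scalar curvature $\scal^h=-\frac{n-1}{n}\tau^2+A^2\phi^{-2N}$ and rule out $\cY_h(Z)=0$ by contradiction through Proposition \ref{propYamabe}; the paper instead computes $G_g(u)=G_{\ghat}(\uhat)$ directly and, when $M\setminus Z$ has positive measure, obtains strict positivity quantitatively from the Poincar\'e inequality, treating $Z=M$ as a separate case. Your version handles the two cases uniformly; the paper's yields an explicit lower bound for $\cY_g(Z)$. For \ref{itYamabe}$\Rightarrow$\ref{itAllA} the paper's pivot is Lemma \ref{lmEigenvalue} (a constant $K$ making $-\frac{4(n-1)}{n-2}\Delta+\scal+K\frac{n-1}{n}\tau^2$ have positive first eigenvalue), after which the barriers are constant multiples of the solution of the \emph{linear} equation \eqref{eqU}; you instead solve a \emph{semilinear} auxiliary equation by direct minimization and scale that solution. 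Your coercivity lemma is the exact functional analogue of Lemma \ref{lmEigenvalue}: it is where $\cY_g(Z)>0$ enters, and like the paper it avoids any case distinction on the sign of $\cY_g(M)$. The barrier computation is parallel to the paper's, and your uniqueness argument via the ratio $v=\phi_2/\phi_1$ in the metric $\phi_1^{N-2}g$ is a standard alternative to the Br\'ezis--Oswald identity used in Proposition \ref{propUniqueness}; the sign bookkeeping you give is right, and the identification of the exceptional case with $\cY_g(M)=0$ via Proposition \ref{propYamabe} is fine.

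The one place you are too quick is the passage from ``minimizer'' to ``positive $W^{2,p}$ solution'' of your auxiliary equation. Since $\tau^2$ is only in $L^p$, the term $\int_M\tau^2|\theta|^N d\mu^g$ may be $+\infty$ on $W^{1,2}$, so the Euler--Lagrange equation must be derived against bounded test functions (dominated convergence, using $\int_M\tau^2|\theta_*|^N d\mu^g<\infty$ at the minimizer); more importantly, ``elliptic regularity'' does not directly give $\theta_*\in W^{2,p}$, because $\tau^2\theta_*^{N-1}$ is not yet known to lie in any useful $L^q$ when $\theta_*$ is merely in $L^N$. You first need $\theta_*\in L^\infty$: either observe that, $\theta_*\geq 0$ and the favourable sign of the $\tau^2\theta_*^{N-1}$ term make $\theta_*$ a $W^{1,2}$ subsolution of a linear equation with $L^p$ potential and $L^p$ right-hand side, so the De Giorgi--Nash--Moser bound applies, or truncate $\tau$ exactly as the paper does with $f_k=\min\{f,k\}$ in Section \ref{secPSC}. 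With $\theta_*\in L^\infty$ the bootstrap to $W^{2,p}$ and the Harnack/strong maximum principle step go through (and $\theta_*\not\equiv 0$ is automatic, since $\theta_*\equiv 0$ does not solve the equation when $A\not\equiv 0$). This is a fixable technical point rather than a flaw in the strategy, but it is precisely the kind of issue the paper's linear auxiliary equation \eqref{eqU} is designed to sidestep.
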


It should be noted that the theorem can be applied in particular when $\cZ$
has zero Lebesgue measure. This is the case if $\tau$ never vanishes or
if $0$ is a regular value for $\tau$.

This theorem reproduces results from \cite{HNT1, HNT2,MaxwellRoughCompact,MaxwellNonCMC}
and references therein (see also \cite{GicquaudSmallTT})
in which several proofs are given according to the sign of $\cY_g(M)$ and
the nullity of $\tau$. The main novelty is that the proof establishes a
direct link between existence of solutions to the Lichnerowicz equation
and the fact that $\cY_g(Z) > 0$. We first state a lemma:

\begin{lemma}\label{lmEigenvalue}
Under the assumptions of the theorem, if $\cY_g(Z) > 0$, there exists
a constant $K > 0$ such that the operator
\[
 u \mapsto -\frac{4(n-1)}{n-2} \Delta u + \scal~u + K \frac{n-1}{n} \tau^2 u
\]
has positive first eigenvalue.
\end{lemma}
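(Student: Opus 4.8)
The plan is to characterize the first eigenvalue variationally and then run the same compactness argument that underlies the proof of Proposition~\ref{propYamabe}. Write $L_K u \definedas -\frac{4(n-1)}{n-2}\Delta u + \scal\,u + K\frac{n-1}{n}\tau^2 u$. Since $\scal$ and $\tau^2$ lie in $L^p$ with $p > n/2$, the operator $L_K$ is a self-adjoint Schrödinger-type operator with compact resolvent on $L^2$, so its first eigenvalue is
\[
 \lambda_K \definedas \inf_{u \in W^{1,2}\setminus\{0\}} \frac{G_g(u) + K\frac{n-1}{n}\int_M \tau^2 u^2\, d\mu^g}{\|u\|_{L^2}^2},
\]
and the statement ``$\lambda_K > 0$'' is equivalent to the coercivity estimate $G_g(u) + K\frac{n-1}{n}\int_M \tau^2 u^2\, d\mu^g \geq \lambda_K \|u\|_{L^2}^2$ for all $u$. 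It therefore suffices to show $\lambda_K > 0$ for $K$ large enough, and I would do this by contradiction.

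So assume $\lambda_k \leq 0$ for every integer $k$. Then there exist $u_k \in W^{1,2}$ with $\|u_k\|_{L^2} = 1$ and $G_g(u_k) + k\frac{n-1}{n}\int_M \tau^2 u_k^2\, d\mu^g < \tfrac1k$. Since the $\tau^2$-term is non-negative, $G_g(u_k) < \tfrac1k$, and I would bound $(u_k)_k$ in $W^{1,2}$ exactly as in the derivation of \eqref{eqEstimate}: estimate $\bigl|\int_M \scal\,u_k^2\, d\mu^g\bigr| \leq \|\scal\|_{L^p}\|u_k\|_{L^q}^2$ with $q$ chosen so that $1 = \tfrac1p + \tfrac2q$ (hence $q \in [2,N)$), interpolate $\|u_k\|_{L^q} \leq \|u_k\|_{L^2}^{1-\theta}\|u_k\|_{L^N}^{\theta} = \|u_k\|_{L^N}^{\theta}$ with $\theta \in [0,1)$, and feed in the Sobolev inequality \eqref{eqSobolev}; the exponent $\theta < 1$ then yields a uniform bound on $\|u_k\|_{W^{1,2}}$. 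Passing to a subsequence, $u_k \rightharpoonup u_\infty$ weakly in $W^{1,2}$, strongly in $L^2$, and strongly in every $L^q$ with $q < N$; in particular $\|u_\infty\|_{L^2} = 1$, so $u_\infty \not\equiv 0$.

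Next I would show $u_\infty \in \cF(Z)$. The lower bound on $G_g(u_k)$ obtained above gives $G_g(u_k) \geq -C$ uniformly in $k$, whence $\int_M \tau^2 u_k^2\, d\mu^g \leq \frac{n}{(n-1)k}\bigl(\tfrac1k + C\bigr) \to 0$. Since $u_k^2 \to u_\infty^2$ in $L^{q/2}$ and $\tau^2 \in L^p = L^{(q/2)'}$ (the identity $1 = \tfrac1p + \tfrac2q$ being exactly $(q/2)' = p$), Hölder's inequality gives $\int_M \tau^2 u_\infty^2\, d\mu^g = 0$, i.e. $u_\infty \equiv 0$ a.e.\ on $M\setminus Z$, so $u_\infty \in \cF(Z)\setminus\{0\}$. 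Finally, by the weak lower semicontinuity of $G_g$ (Proposition~\ref{propSemiContinuity}) and $G_g(u_k) < \tfrac1k$, one has $G_g(u_\infty) \leq \liminf_k G_g(u_k) \leq 0$; on the other hand $\cY_g(Z) > 0$ together with Proposition~\ref{propYamabe} gives $\lambda_g(Z) > 0$, hence $G_g(u_\infty) \geq \lambda_g(Z)\|u_\infty\|_{L^2}^2 = \lambda_g(Z) > 0$, a contradiction. This forces $\lambda_K > 0$ for $K$ large, which is the claim.

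\textbf{Main obstacle.} The delicate point is the uniform $W^{1,2}$ bound: the indefinite low-order term $\scal\,u^2$ must be absorbed by interpolating between $L^2$ and $L^N$ and using the sharp Sobolev constant, just as in Proposition~\ref{propYamabe}. Two further points that need care, though routine, are that the limit is nontrivial (secured by the $L^2$-normalization rather than an $L^N$-normalization) and that it actually lies in $\cF(Z)$ (secured by the exponent bookkeeping $(q/2)' = p$ in the final Hölder step, so that $\int \tau^2 u_k^2$ passes to the limit).
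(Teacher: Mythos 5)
Your proposal is correct and follows essentially the same route as the paper: assume the Rayleigh infimum of $L_k$ is nonpositive for all $k$, obtain a uniform $W^{1,2}$ bound on normalized (near-)minimizers by the H\"older/interpolation/Sobolev absorption, extract a weak $W^{1,2}$, strong $L^2$ limit $u_\infty$ with $\|u_\infty\|_{L^2}=1$, show $u_\infty \in \cF(Z)$ because $\int_M \tau^2 u_k^2\,d\mu^g \to 0$, and contradict $\lambda_g(Z)>0$ (via Proposition~\ref{propYamabe}) using the weak lower semicontinuity of $G_g$. The only differences are cosmetic: you work with near-minimizers of the Rayleigh quotient instead of the actual first eigenfunctions, and you pass $\int_M \tau^2 u_k^2\,d\mu^g$ to the limit directly by H\"older rather than arguing by contrapositive as the paper does.
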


\begin{proof}
Assume by contradiction that for all $k \in \bN$, the first eigenvalue of
\[
 L_k: u \mapsto -\frac{4(n-1)}{n-2} \Delta u + \scal~u + k \frac{n-1}{n} \tau^2 u
\]
is non-positive. We denote it by $\lambda_k$ and let $u_k \in W^{2, p/2}$
be the first eigenfunction normalized so that $u_k \geq 0$ and
$\|u_k\|_{L^2} = 1$. The sequence $(\lambda_k)_k$ is increasing since
\begin{align*}
\lambda_{k+1}
 &= \int_M u_{k+1} L_{k+1} u_{k+1} d\mu^g\\
 &= \int_M u_{k+1} L_k u_{k+1} d\mu^g + \int_M \frac{n-1}{n} \tau^2 u_{k+1}^2\\
 &\geq  \int_M u_{k+1} L_k u_{k+1} d\mu^g\\
 &\geq \lambda_k.
\end{align*}
We claim that the sequence $(u_k)_k$ is bounded in $W^{1, 2}$.
Indeed, we have, using the H\"older inequality:
\begin{align*}
0
& \geq \int_M \left[\frac{4(n-1)}{n-2} |du_k|^2 + \scal~u_k^2\right] d\mu^g\\
& \geq \frac{4(n-1)}{n-2} \int_M |du_k|^2 d\mu^g - \|\scal\|_{L^p} \|u_k\|_{L^N}^{\frac{n}{p}} \|u_k\|_{L^2}^{2-\frac{n}{p}} \\
& \geq \frac{4(n-1)}{n-2} \int_M |du_k|^2 d\mu^g - 2\|\scal\|_{L^p} \left[\frac{n}{p} \epsilon \|u_k\|_{L^N}^2 + \frac{2p-n}{2p} \|u_k\|_{L^2}^2 \epsilon^{-n/(2p-n)}\right]\\
& \geq \frac{4(n-1)}{n-2} \int_M |du_k|^2 d\mu^g - \|\scal\|_{L^p} \left[\frac{2n}{sp} \epsilon \|u_k\|_{W^{1, 2}}^2 + \frac{2p-n}{p} \|u_k\|_{L^2}^2 \epsilon^{-n/(2p-n)}\right],
\end{align*}
where we used the $\epsilon$-Young inequality and the Sobolev inequality
\eqref{eqSobolev}. Assuming that $\scal \not\equiv 0$ (if $\scal \equiv 0$
the argument is simpler), we choose $\epsilon$ such that 
\[
 \|\scal\|_{L^p} \frac{2n}{sp} \epsilon = \frac{2(n-1)}{n-2},
\]
so
\[
 0 \geq \frac{2(n-1)}{n-2} \int_M |du_k|^2 d\mu^g - C \|u_k\|_{L^2}^2,
\]
for some explicit constant  $C = C(n, s, p, \|\scal\|_{L^p})$. Since
$\|u_k\|_{L^2} = 1$, this proves the claim that $(u_k)_k$ is bounded in
$L^2$.

From Rellich theorem, we now extract a subsequence $(k_i)_i$ of $k$ such
that
\[
 u_{k_i} \to u_\infty\text{ in } L^2
\]
for some $u_\infty \in W^{1, 2}$. In particular, $\|u_\infty\|_{L^2} = 1$.
We can also assume that
\[
 u_{k_i} \rightharpoonup u_\infty\text{ in } W^{1, 2}.
\]
We claim that $u_\infty \equiv 0$ a.e. on $M \setminus Z$. Otherwise,
\[
 \int_M \tau^2 u_{k_i}^2 d\mu^g \to \int_M \tau^2 u_\infty^2 d\mu^g \not= 0,
\]
so
\begin{align*}
\lambda_{k_i}
&= \int_M u_{k_i} L_{k_i} u_{k_i} d\mu^g\\
&= \int_M u_{k_i} L_0 u_{k_i} d\mu^g + k_i \frac{n-1}{n} \int_M \tau^2 u_{k_i}^2 d\mu^g\\
&\geq \lambda_0 + k_i \frac{n-1}{n} \int_M \tau^2 u_{k_i}^2 d\mu^g\\
&\to_{i \to \infty} \infty,
\end{align*}
contradicting the fact that $(\lambda_k)_k$ is bounded. Since
$\|u_\infty\|_{L^2} = 1$ and belongs to $\cF(Z)$, we have a contradiction
if $\cF(Z) = \{0\}$. In the case where $\cF(Z) \not= \{0\}$, we also get
a contradiction since
\[
 \lambda_{k_i} = G_g(u_{k_i}) + k \frac{n-1}{n} \int_M \tau^2 u_{k_i}^2 \geq G_g(u_{k_i}),
\]
so, since $G_g$ is weakly lower semicontinuous,
\[
 \liminf_{i \to \infty} \lambda_{k_i} \geq \liminf_{i \to \infty} G_g(u_{k_i}) \geq G_g(u_\infty) \geq \lambda_g(Z) > 0.
\]
This gives the final contradiction.
\end{proof}

\begin{proof}[Proof of Theorem \ref{thmCompact}]
The statement \ref{itAllA} $\Rightarrow$ \ref{itSomeA} is obvious. We now
prove that \ref{itSomeA} $\Rightarrow$ \ref{itYamabe}. The proof is
similar to that of Proposition \ref{propConfInvariance}. If
$\cF(Z) = \{0\}$, Statement \ref{itYamabe} is satisfied since
$\cY_g(Z) = \infty$. Otherwise, assume given $A \in L^{2p}$ and
$\phi \in W^{2, p}$ satisfying \eqref{eqLichnerowicz}. We set
$\ghat = \phi^{N-2}g$ and $\uhat = u \phi^{-1}$. For all $u \in \cF(Z)$, we have
\begin{align*}
G_g(u)
 &= G_g\left(\phi \uhat\right)\\
 &= \int_M \left[\frac{4(n-1)}{n-2} \left(\phi^2 \left|d \uhat\right|^2_g + \left\<\phi d\phi, d(\uhat^2)\right\>_g + \uhat^2 \left|d \phi\right|^2_g\right) + \scal~\phi^2 \uhat^2\right] d\mu^g\\
 &= \int_M \left[\frac{4(n-1)}{n-2} \left(\phi^2 \left|d \uhat\right|^2_g - (\phi \Delta \phi) \uhat^2\right) + \scal~\phi^2 \uhat^2\right] d\mu^g\\
 &= \int_M \left[\frac{4(n-1)}{n-2} \phi^2 \left|d \uhat\right|^2_g + \left(\frac{A^2}{\phi^N} - \frac{n-1}{n} \tau^2 \phi^N\right) \uhat^2\right] d\mu^g\\
 &= \int_M \left[\frac{4(n-1)}{n-2} \left|d \uhat\right|^2_{\ghat} + \left(\frac{A^2}{\phi^{2N}} - \frac{n-1}{n} \tau^2\right) \uhat^2\right] d\mu^{\ghat}\\
 &\geq \int_M \left[\frac{4(n-1)}{n-2} \left|d \uhat\right|^2_{\ghat} + \frac{A^2}{\phi^{2N}} \uhat^2\right] d\mu^{\ghat} \qquad\text{(since $\uhat \in \cF(Z)$)}.
\end{align*}
This immediately rules out the possibility that $\cY_g(Z) < 0$ since
$G_g(u) \geq 0$ for all $u \in \cF(Z)$. Assume next that $M\setminus Z$
has positive Lebesgue measure. Then, $\uhat \equiv 0$ on $M\setminus Z$.
As a consequence, from the Poincar\'e inequality, there is a constant
$\mu = \mu(g, \tau)$ so that
\[
 G_g(u) \geq \frac{4(n-1)}{n-2} \int_M \left|d \uhat\right|^2_{\ghat} d\mu^{\ghat}\geq \mu \|\uhat\|_{W^{1, 2}}^2,
\]
(see e.g. \cite[Lemma 7.16]{GilbargTrudinger})
and, hence, from the Sobolev embedding theorem,
\[
 G_g(u) \geq s\mu \|\uhat\|_{L^N}^2.
\]
This proves that
\[
 \cY_g(Z) =  s \mu > 0.
\]
The only remaining possibility is that $\tau \equiv 0$ a.e. that is to
say $Z = M$ and $\cY_g(M) = 0$. From the proof of Proposition
\ref{propYamabe}, there exists a function $u_\infty \geq 0$,
$u_\infty \not\equiv 0$ so that $G_g(u_\infty) = 0$. From the inequality
\[
 G_g(u_\infty) \geq \frac{4(n-1)}{n-2} \int_M \left|d \uhat_\infty\right|^2_{\ghat} d\mu^{\ghat},
\]
we have $d\uhat_\infty \equiv 0$: $\uhat_\infty$ is a constant function.
This gives a contradiction since
\[
 0 = G_g(u_\infty) = \int_M \frac{A^2}{\phi^{2N}} \uhat_\infty^2 d\mu^{\ghat} > 0.
\]

We finally prove that \ref{itYamabe} $\Rightarrow$ \ref{itAllA}. The
proof goes as usual by the sub- and super-solution method
(see e.g. \cite[Chapter 14]{Taylor3}). Let $K$ be as
in the statement of Lemma \ref{lmEigenvalue}. We let $u$ denote the
solution to
\begin{equation}\label{eqU}
 -\frac{4(n-1)}{n-2} \Delta u + \scal~u + K \frac{n-1}{n} \tau^2 u = A^2.
\end{equation}
Since the operator on the left hand side is positive, its Green function
is positive, so $u \in W^{2, p}$ is also positive (note that $u$ is
H\"older continuous). We set
\[
\left\lbrace
\begin{aligned}
u_+ = &\lambda_+ u,\\
u_- = &\lambda_- u
\end{aligned}
\right.
\]
for some positive constants $\lambda_\pm$ to be chosen later. We want
$u_+$ to be a super-solution to the Lichnerowicz equation
\eqref{eqLichnerowicz}, i.e. $u_+$ has to satisfy
\[
 -\frac{4(n-1)}{n-2} \Delta u_+ + \scal~u_+ + \frac{n-1}{n} \tau^2 u_+^{N-1} \geq \frac{A^2}{u_+^{N+1}}.
\]
From Equation \eqref{eqU}, this is equivalent to
\[
 \frac{n-1}{n}\tau^2 \left(\lambda_+^{N-1} u^{N-1} - K \lambda_+ u\right) + \lambda_+ A^2 \geq \frac{A^2}{\lambda_+^{N+1} u^{N+1}}.
\]
This inequality holds true if both the following inequalities are fulfilled:
\[
\left\lbrace
\begin{aligned}
\lambda_+^{N-2} u^{N-2} &\geq K,\\
\lambda_+^{N+2} &\geq u^{-N-1}.
\end{aligned}
\right.
\]
Since $u$ is bounded from above and away from zero, they are true for
large enough $\lambda_+$. Calculations for the sub-solution are similar:
if $\lambda_-$ is a small enough positive constant $u_-$ is a sub-solution
to the Lichnerowicz equation \eqref{eqLichnerowicz}.
By the sub- and super-solution argument, we get existence of $u \in W^{2, p}$
solving \eqref{eqLichnerowicz}. Uniqueness of $u$ will be proven in the
next proposition.
\end{proof}

\begin{proposition}\label{propUniqueness}
Let $(M, g)$ be a compact Riemannian manifold with $g \in W^{2, p}$,
$p > n/2$. Let $\tau, A \in L^{2p}$ be two given functions. Assume given
two positive functions $\phi_1, \phi_2 \in W^{2, p}$ solving the Lichnerowicz
equation \eqref{eqLichnerowicz}.
\begin{itemize}
 \item If $\tau \not\equiv 0$ or $A \not\equiv 0$, we have $\phi_1 \equiv \phi_2$,
 \item If $\tau, A \equiv 0$, $\phi_1$ and $\phi_2$ are proportional.
\end{itemize}
\end{proposition}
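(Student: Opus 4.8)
The plan is to exploit the conformal covariance of the Lichnerowicz equation to normalize $\phi_2$ to the constant function $1$, and then to run a short energy argument based on the monotonicity of the nonlinearity in $\phi$. This mirrors the manipulations already used in Proposition \ref{propConfInvariance} and in the proof of the implication \ref{itSomeA} $\Rightarrow$ \ref{itYamabe} of Theorem \ref{thmCompact}.

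First I would set $\ghat \definedas \phi_2^{N-2} g$ and $w \definedas \phi_1 \phi_2^{-1}$. As in the computations of Proposition \ref{propConfInvariance}, the conformal Laplacian transforms according to
\[
 -\frac{4(n-1)}{n-2} \Delta^{\ghat} v + \scal^{\ghat}~v = \phi_2^{1-N}\left(-\frac{4(n-1)}{n-2} \Delta^g (\phi_2 v) + \scal^g~(\phi_2 v)\right)\qquad\forall v\in W^{2,p}.
\]
Taking $v = 1$ and using that $\phi_2$ solves \eqref{eqLichnerowicz} gives $\scal^{\ghat} = -\frac{n-1}{n} \tau^2 + A^2 \phi_2^{-2N}$, while taking $v = w$ and using that $\phi_1 = \phi_2 w$ solves \eqref{eqLichnerowicz} shows that $w$ solves the Lichnerowicz equation with data $(\ghat, \tau, A\phi_2^{-N})$. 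Writing $\widehat A \definedas A\phi_2^{-N}$ and substituting the value of $\scal^{\ghat}$, the equation for $w$ reduces to
\[
 -\frac{4(n-1)}{n-2} \Delta^{\ghat} w = -\frac{n-1}{n} \tau^2 \left(w^{N-1} - w\right) + \widehat A^2 \left(w^{-N-1} - w\right).
\]
Since $\phi_1,\phi_2 \in W^{2,p} \hookrightarrow C^0$ are bounded and bounded away from $0$, $w$ has the same regularity as $\phi_1$, $\tau \not\equiv 0$ is unchanged, and $A \not\equiv 0 \Leftrightarrow \widehat A \not\equiv 0$; so it suffices to show that $w$ is a positive constant, equal to $1$ whenever $\tau \not\equiv 0$ or $A \not\equiv 0$.

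Next I would multiply the last display by $w-1$ and integrate over $M$ against $d\mu^{\ghat}$. Integration by parts turns the left side into $\frac{4(n-1)}{n-2}\int_M \left|dw\right|^2_{\ghat}\,d\mu^{\ghat}\geq 0$, while on the right side the elementary facts, valid for $s>0$, that $\left(s^{N-1}-s\right)(s-1) = s\left(s^{N-2}-1\right)(s-1)\geq 0$ (using $N>2$) and that $\left(s^{-N-1}-s\right)(s-1)\leq 0$ ($s\mapsto s^{-N-1}-s$ being strictly decreasing and vanishing at $s=1$) make the integrand nonpositive. Hence both sides vanish, so $w\equiv c$ for some $c>0$, and moreover $\int_M \tau^2\left(c^{N-1}-c\right)(c-1)\,d\mu^{\ghat} = 0$ and $\int_M \widehat A^2\left(c^{-N-1}-c\right)(c-1)\,d\mu^{\ghat} = 0$. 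If $\tau,A\equiv 0$ this already gives $\phi_1 = c\,\phi_2$, i.e. proportionality. If $\tau\not\equiv 0$, then $\int_M\tau^2\,d\mu^{\ghat}>0$ forces $\left(c^{N-1}-c\right)(c-1)=0$, hence $c=1$ since $c^{N-1}-c$ has the same sign as $c-1$ for $c>0$; the case $A\not\equiv 0$ is identical using $\widehat A\not\equiv 0$. I expect the only point needing care to be the justification of the conformal change and of the integration by parts at the regularity $W^{2,p}$ with $p$ possibly below $2$, which is handled exactly as the analogous computations in Proposition \ref{propConfInvariance} and in the proof of Theorem \ref{thmCompact}, using that $\phi_1,\phi_2$ are bounded away from $0$.
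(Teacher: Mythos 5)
Your argument is correct, but it follows a different route from the paper. The paper proves uniqueness via the Br\'ezis--Oswald device: it integrates the identity
\[
\int_M \Bigl(-\tfrac{\Delta \phi_1}{\phi_1} + \tfrac{\Delta \phi_2}{\phi_2}\Bigr)(\phi_1^2 - \phi_2^2)\,d\mu^g
= \int_M \Bigl|d\phi_1 - \tfrac{\phi_1}{\phi_2}\,d\phi_2\Bigr|^2 d\mu^g + \int_M \Bigl|d\phi_2 - \tfrac{\phi_2}{\phi_1}\,d\phi_1\Bigr|^2 d\mu^g,
\]
rewrites the left-hand side as $\int_M [f(\phi_1)-f(\phi_2)](\phi_1^2-\phi_2^2)\,d\mu^g$ with $f(\phi)=\frac{n-2}{4(n-1)}\bigl[A^2\phi^{-N-2}-\scal-\frac{n-1}{n}\tau^2\phi^{N-2}\bigr]$ decreasing in $\phi$, and concludes $d(\phi_1/\phi_2)=0$ without ever changing the metric. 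You instead conformally normalize $\phi_2$ to $1$ (the computation being exactly that of Proposition \ref{propConfInvariance}), obtain the transformed equation for $w=\phi_1/\phi_2$ with data $(\ghat,\tau,A\phi_2^{-N})$, and test against $w-1$; the sign facts $(s^{N-1}-s)(s-1)\ge 0$ and $(s^{-N-1}-s)(s-1)\le 0$ then force $dw\equiv 0$ and, when $\tau\not\equiv 0$ or $A\not\equiv 0$, the constant to be $1$. Both proofs ultimately rest on the monotonicity of the nonlinearity; yours reuses the conformal covariance already established in the paper and identifies the proportionality constant very transparently (it is the argument in the spirit of the cited \cite[Proposition 2]{MaxwellNonCMC}), at the cost of having to justify the conformal change and the Green formula for the $W^{2,p}$ metric $\ghat$, whereas the paper's identity works directly in the background metric with only one integration by parts. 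Your closing remark about regularity is the right point to flag, and it is handled exactly as in the paper's own computations since $\phi_1,\phi_2$ are bounded away from zero; note also that, as in the paper's proof, the conclusion that $w$ is a single constant uses connectedness of $M$.
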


\begin{proof}
The proof of this fact is well known, we refer the reader e.g. to
\cite[Proposition 2]{MaxwellNonCMC} or to \cite{DahlGicquaudHumbert}.
We present here the argument from \cite{BrezisOswald}.
Since $\phi_1$ and $\phi_2$ are bounded from below, we have that
$\phi_1^2/\phi_2$ and $\phi_2^2/\phi_1$ both belong to $W^{2, p}$.
By an integration by parts and some routine calculations, we have
\begin{equation}\label{eqBrezis}
\begin{aligned}
&\int_M \left(-\frac{\Delta \phi_1}{\phi_1} + \frac{\Delta \phi_2}{\phi_2}\right) (\phi_1^2 - \phi_2^2) d\mu^g\\
&\qquad\qquad = \int_M \left|d\phi_1 - \frac{\phi_1}{\phi_2} d\phi_2 \right|^2 d\mu^g + \int_M \left|d\phi_2 - \frac{\phi_2}{\phi_1} d\phi_1 \right|^2 d\mu^g.
\end{aligned}
\end{equation}
If we set
\[
 f(\phi) \definedas \frac{n-2}{4(n-1)} \left[\frac{A^2}{\phi^{N+2}} - \scal - \frac{n-1}{n} \tau^2 \phi^{N-2}\right],
\]
we have
\[
 -\frac{\Delta \phi_1}{\phi_1} = f(\phi_1)\quad\text{and}\quad-\frac{\Delta \phi_2}{\phi_2} = f(\phi_2),
\]
so the identity \eqref{eqBrezis} gives
\[
\begin{aligned}
&\int_M \left[f(\phi_1) - f(\phi_2)\right] (\phi_1^2 - \phi_2^2) d\mu^g\\
&\qquad\qquad = \int_M \left|d\phi_1 - \frac{\phi_1}{\phi_2} d\phi_2 \right|^2 d\mu^g + \int_M \left|d\phi_2 - \frac{\phi_2}{\phi_1} d\phi_1 \right|^2 d\mu^g.
\end{aligned}
\]
Since $f$ is a decreasing function, we have
\[
  \left[f(\phi_1) - f(\phi_2)\right] (\phi_1^2 - \phi_2^2) \leq 0\text{ a.e.}
\]
This impose that
\[
 \int_M \left|d\phi_1 - \frac{\phi_1}{\phi_2} d\phi_2 \right|^2 d\mu^g + \int_M \left|d\phi_2 - \frac{\phi_2}{\phi_1} d\phi_1 \right|^2 d\mu^g = 0.
\]
In particular, we have
\[
 d\phi_1 - \frac{\phi_1}{\phi_2} d\phi_2 = 0\text{ a.e. } \Leftrightarrow d\left(\frac{\phi_1}{\phi_2}\right) = 0\text{ a.e.}
\]
meaning that $\phi_1$ and $\phi_2$ are proportional one another and they
are equal unless $f$ is a constant function at all points of $M$, i.e.
unless $\tau, A \equiv 0$.
\end{proof}

\section{Existence of solutions to the prescribed scalar curvature
equation}
\label{secPSC}
Our focus in this section is Equation \eqref{eqLichnerowicz} with
$A \equiv 0$, namely
\begin{equation}\label{eqPSC}
 -\frac{4(n-1)}{n-2} \Delta \phi + \scal~\phi = -f \phi^{N-1},
\end{equation}
where $f = \frac{n-1}{n} \tau^2 \geq 0$. This equation is the well-known
prescribed scalar curvature equation (see e.g. \cite{Aubin} for an
introduction). The aim of this section is to give a full proof of Theorem
\ref{thmPSC} with an argument that is simpler than the one in
\cite{Rauzy,DiltsMaxwell}, following the lines of \cite{GicquaudScalar}.
One difficulty in the study of Equation \eqref{eqPSC} is to show that
$\phi \not\equiv 0$ since $\phi \equiv 0$ is a trivial solution to
\eqref{eqPSC}. This is overcome by studying the asymptotics of $\phi$ in
the non-compact case while here the argument has to be different. The
theorm we prove is the following:

\begin{theorem}\label{thmPSC}
Let $(M, g)$ be a compact Riemannian manifold with $g \in W^{2, p/2}$,
$p > n$. Assume that $f \in L^p$, $f \geq 0$, $f \not\equiv 0$, is given.
Then the following statements are equivalent:
\begin{enumerate}
 \item\label{itPSC} There exists a positive solution $\phi \in W^{2, p}$
to \eqref{eqPSC},
 \item\label{itY} We have $\cY_g(M) < 0$ and the set $Z = f^{-1}(0)$
satisfies $\cY_g(Z) > 0$.
\end{enumerate}
Further, the solution to \eqref{eqPSC}, when it exists, is unique.
\end{theorem}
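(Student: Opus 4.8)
I would prove the equivalence by establishing $(\ref{itPSC})\Rightarrow(\ref{itY})$ and $(\ref{itY})\Rightarrow(\ref{itPSC})$, and deduce uniqueness directly from Proposition \ref{propUniqueness} (with $A\equiv 0$, $\tau\not\equiv 0$). For $(\ref{itPSC})\Rightarrow(\ref{itY})$, which is a minor variant of the computation in the proof of Theorem \ref{thmCompact}, take a positive solution $\phi\in W^{2,p}$ and set $\ghat=\phi^{N-2}g$; the conformal transformation law for the scalar curvature and \eqref{eqPSC} give $\scal^{\ghat}=-f\le 0$, $\not\equiv 0$. Testing the Yamabe quotient of $\ghat$ against $u\equiv 1$ gives $\cY_{\ghat}(M)\le\frac{\int_M(-f)\,d\mu^{\ghat}}{\vol_{\ghat}(M)^{2/N}}<0$, whence $\cY_g(M)<0$ by Proposition \ref{propConfInvariance}. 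Since $M\setminus Z=\{f>0\}$ has positive measure and $fu^2\equiv 0$ for $u\in\cF(Z)$, on $\cF(Z)$ one has $G_{\ghat}(u)=\tfrac{4(n-1)}{n-2}\|du\|_{L^2_{\ghat}}^2$, so the Poincar\'e inequality for functions vanishing on a set of positive measure, combined with the Sobolev inequality, forces $\cY_{\ghat}(Z)>0$, i.e.\ $\cY_g(Z)>0$.

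For $(\ref{itY})\Rightarrow(\ref{itPSC})$ I would argue by constrained minimization rather than by sub- and super-solutions, precisely because this builds the nontriviality of the solution into the construction. Let
\[
 m\definedas\inf\Big\{\,G_g(\phi):\phi\in W^{1,2},\ \int_M f|\phi|^N\,d\mu^g=1\,\Big\}.
\]
First $m<0$: by $\cY_g(M)<0$ there is $u$ with $G_g(u)<0$, and $u\notin\cF(Z)$ because on $\cF(Z)$ we have $G_g\ge\lambda_g(Z)\|\cdot\|_{L^2}^2$ with $\lambda_g(Z)>0$ by Proposition \ref{propYamabe} and the hypothesis $\cY_g(Z)>0$; hence $\int_M f|u|^N>0$ and $u$ may be rescaled to meet the constraint while keeping $G_g<0$. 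Next, every minimizing sequence $(\phi_k)_k$ is bounded in $W^{1,2}$: a G\r{a}rding-type inequality $G_g(\phi)\ge c_1\|\phi\|_{W^{1,2}}^2-C'\|\phi\|_{L^2}^2$ (obtained as in the proof of Proposition \ref{propYamabe} from $\scal\in L^{p/2}$, H\"older's inequality, the $\epsilon$-Young inequality, \eqref{eqSobolev} and an $L^2$--$L^N$ interpolation) reduces this to an $L^2$ bound, and the latter follows by contradiction: if $\|\phi_k\|_{L^2}\to\infty$ then $v_k\definedas\phi_k/\|\phi_k\|_{L^2}$ satisfies $\|v_k\|_{L^2}=1$, $G_g(v_k)\to 0$, $\int_M f|v_k|^N\to 0$, and a subsequence converges weakly in $W^{1,2}$ and strongly (hence a.e.) in $L^2$ to some $v_\infty$; Fatou's lemma gives $\int_M f|v_\infty|^N=0$, i.e.\ $v_\infty\in\cF(Z)\setminus\{0\}$, while Proposition \ref{propSemiContinuity} gives $G_g(v_\infty)\le 0$, contradicting $\lambda_g(Z)>0$.

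With a bounded minimizing sequence in hand, the infimum is attained: passing to $\phi_k\rightharpoonup\phi_\infty$ in $W^{1,2}$, strongly and a.e.\ in $L^2$, Proposition \ref{propSemiContinuity} gives $G_g(\phi_\infty)\le m<0$, so $\phi_\infty\not\equiv 0$, $\phi_\infty\notin\cF(Z)$, hence $c\definedas\int_M f|\phi_\infty|^N>0$, while Fatou gives $c\le 1$. Then $c^{-1/N}\phi_\infty$ meets the constraint and $m\le G_g(c^{-1/N}\phi_\infty)=c^{-2/N}G_g(\phi_\infty)\le c^{-2/N}m\le m$, which forces $c=1$ and $G_g(\phi_\infty)=m$: $\phi_\infty$ is a minimizer. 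Replacing $\phi_\infty$ by $|\phi_\infty|$ we may take $\phi_\infty\ge 0$, and the Euler--Lagrange equation (whose Lagrange multiplier equals $m$ after testing against $\phi_\infty$) is $-\tfrac{4(n-1)}{n-2}\Delta\phi_\infty+\scal\,\phi_\infty=m\,f\,\phi_\infty^{N-1}$ with $m<0$; hence $\phi\definedas|m|^{1/(N-2)}\phi_\infty$ solves \eqref{eqPSC}. Standard elliptic regularity for the critical nonlinearity (Moser iteration, then bootstrap) gives $\phi\in W^{2,p}$, and since $\phi\ge 0$ solves $\Delta\phi=\tfrac{n-2}{4(n-1)}(\scal+f\phi^{N-2})\,\phi$ with potential in $L^{p/2}$, $p/2>n/2$, the strong maximum principle forces $\phi>0$ everywhere.

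The main obstacle is the critical Sobolev exponent in the minimization — the embedding $W^{1,2}\hookrightarrow L^N$ is not compact, so compactness of minimizing sequences is not for free. The resolution uses the two hypotheses in complementary ways: $\cY_g(Z)>0$ prevents a minimizing sequence from drifting into $\cF(Z)$ and concentrating there (this drives the contradiction in the $L^2$ bound), while $\cY_g(M)<0$ makes $m<0$, which both excludes the trivial weak limit and, via the scaling chain above, turns the weak limit into an actual minimizer. I expect the $W^{1,2}$ bound on minimizing sequences and the attainment step to be the only genuinely delicate points; the Euler--Lagrange equation, the regularity and positivity of $\phi$, and uniqueness via Proposition \ref{propUniqueness} are then routine.
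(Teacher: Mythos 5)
Your proposal is correct in outline, but it takes a genuinely different route from the paper on the main implication \ref{itY}$\Rightarrow$\ref{itPSC}. The paper minimizes the \emph{free} functional $F(\phi)=G_g(\phi)+\tfrac2N\int_M f|\phi|^N d\mu^g$, first under the extra hypothesis $f\in L^\infty$ (needed so that $F$ is finite and continuous on $W^{1,2}$): coercivity is Lemma \ref{lmCoercivity} (driven by $\cY_g(Z)>0$ via a normalization/blow-up argument), nontriviality of the minimizer comes from $F(\lambda w)<0$ for a witness $w$ of $\cY_g(M)<0$, and the general $f\in L^p$ case is then recovered by truncating $f_k=\min\{f,k\}$, using the maximum principle to get a monotone family of solutions, elliptic compactness to pass to the limit, and the energy comparison $F_k(\phi_k)\le F(v)<0$ to exclude the zero limit. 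You instead minimize $G_g$ on the constraint set $\{\int_M f|\phi|^N=1\}$, which makes nontriviality automatic ($m<0$), replaces the coercivity lemma by your normalization argument for the $L^2$ bound (same mechanism, $\lambda_g(Z)>0$ via Proposition \ref{propYamabe}), and handles the loss of compactness by Fatou plus the scaling chain $m\le c^{-2/N}G_g(\phi_\infty)\le c^{-2/N}m\le m$; this cleanly avoids the whole truncation/Dini limit of the paper. The implication \ref{itPSC}$\Rightarrow$\ref{itY} and uniqueness via Proposition \ref{propUniqueness} are the same in both treatments.

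The price of skipping the truncation is that the integrability issues caused by $f\in L^p\setminus L^\infty$, which the paper's $L^\infty$ step is designed to neutralize, reappear at three points and are glossed in your write-up. (i) For $m<0$ you must choose the witness $u$ with $\int_M f|u|^N<\infty$ (take $u\in W^{2,p}\subset L^\infty$ by density and continuity of $G_g$, as the paper does in its last step), otherwise it cannot be rescaled onto the constraint. (ii) The Euler--Lagrange derivation needs the constraint functional to be differentiable at $\phi_\infty$; with unbounded $f$ this requires restricting to bounded variations $\psi\in W^{1,2}\cap L^\infty$ (then $f|\phi_\infty|^{N-1}|\psi|\in L^1$ and the computation, including the identification of the multiplier with $m$, goes through). (iii) The sentence ``standard elliptic regularity for the critical nonlinearity'' is the real soft spot: the Brezis--Kato scheme uses a potential $f\phi^{N-2}$ in $L^{n/2}$, which is not automatic when $f$ is merely in $L^p$. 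It is rescued by the sign of the nonlinearity: since $\phi\ge0$ and $-f\phi^{N-1}\le0$, $\phi$ is a subsolution of the linear equation with potential $\scal\in L^{p/2}$, $p/2>n/2$, so Moser/De Giorgi gives $\phi\in L^\infty$, after which $f\phi^{N-1}\in L^p$ and bootstrap yields $W^{2,p}$, then Harnack gives positivity. With these points spelled out, your argument is a complete and somewhat more self-contained alternative to the paper's proof.
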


The proof of \ref{itPSC}$\Rightarrow \cY_g(Z) > 0$ is entirely similar
to the one given in the proof of Theorem \ref{thmCompact} so we omit it.
Note also that the metric $h \definedas \phi^{N-2} g$ has scalar
curvature $-f$ so
\[
 \cY_g(M) = \cY_h(M) \leq Q^Y_h(1) = \frac{G_h(1)}{\vol_h(M)^{2/N}} < 0.
\]
The proof of the converse implication will occupy the remaining of this
note. We first prove it assuming that $f \in L^\infty$ and deduce the
general case from this particular case.

We introduce the functional $F$ defined for all $\phi \in W^{1, 2}$ by
\begin{equation}\label{eqDefFunc}
 F(\phi) \definedas \int_M \left[\frac{4(n-1)}{n-2} |d\phi|^2 + \scal~\phi^2 + \frac{2}{N} f |\phi|^N\right]d\mu^g
\end{equation}
Note that the assumption that $f \in L^\infty$ is required in order to
ensure that
\[
 I(\phi) = \int_M f |\phi|^N d\mu^g < \infty
\]
for all $\phi \in W^{1, 2}$. Note that $\phi \mapsto I(\phi)$ is
continuous for the strong topology and convex since $f \geq 0$.
In particular, it is weakly lower semi-continuous. From Proposition
\ref{propSemiContinuity}, we conclude that $F$ is sequentially weakly
lower semi-continuous.

We now show that $F$ is coercive. This will imply the existence of a
minimizer for $F$. The proof is similar (yet simpler) than the one given
in \cite[Proposition 4.8]{GicquaudScalar}.

\begin{lemma}\label{lmCoercivity}
Assume that \ref{itY} in Theorem \ref{thmPSC} is satisfied, then the
functional $F$ is coercive.
\end{lemma}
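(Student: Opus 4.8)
The goal is to show that, under hypothesis \ref{itY}, the functional $F$ from \eqref{eqDefFunc} is coercive on $W^{1,2}$, i.e.\ $F(\phi_k) \to \infty$ whenever $\|\phi_k\|_{W^{1,2}} \to \infty$. First I would argue by contradiction: suppose there is a sequence $(\phi_k)_k$ with $\|\phi_k\|_{W^{1,2}} \to \infty$ but $F(\phi_k)$ bounded above, say $F(\phi_k) \le C_0$. The natural move is to rescale: set $\psi_k \definedas \phi_k / \|\phi_k\|_{W^{1,2}}$, so $\|\psi_k\|_{W^{1,2}} = 1$. I would then track what the bound on $F(\phi_k)$ says about $\psi_k$. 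Writing $t_k \definedas \|\phi_k\|_{W^{1,2}} \to \infty$, the gradient-plus-scalar part of $F$ scales like $t_k^2$ and the $f|\phi|^N$ part like $t_k^N$, so dividing by $t_k^2$ gives
\[
 \frac{C_0}{t_k^2} \;\ge\; \frac{F(\phi_k)}{t_k^2} \;=\; \int_M\!\left[\tfrac{4(n-1)}{n-2}|d\psi_k|^2 + \scal\,\psi_k^2\right]d\mu^g \;+\; \frac{2}{N}\,t_k^{N-2}\!\int_M f|\psi_k|^N d\mu^g.
\]
Since the left side tends to $0$ and the last term is nonnegative, I conclude both that $\int_M f|\psi_k|^N d\mu^g \to 0$ and that $\limsup_k G_g(\psi_k) \le 0$.

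**Extracting the limit.**
Next I would use that $(\psi_k)_k$ is bounded in $W^{1,2}$ to pass to a subsequence with $\psi_k \rightharpoonup \psi_\infty$ weakly in $W^{1,2}$, strongly in $L^2$ and in every $L^q$ with $q \in [2,N)$, as in the proofs of Propositions \ref{propSemiContinuity} and \ref{propYamabe}. The condition $\int_M f|\psi_k|^N \to 0$ should force $\psi_\infty \equiv 0$ a.e.\ on $M \setminus Z$, hence $\psi_\infty \in \cF(Z)$ — here one has to be a little careful because $f|\psi_k|^N$ only controls $\psi_k$ in $L^N(f\,d\mu)$, so I would instead note that $f|\psi_\infty|^N \le \liminf f|\psi_k|^N$ pointwise a.e.\ along a further subsequence (Fatou), giving $\int_M f|\psi_\infty|^N = 0$ and thus $\psi_\infty = 0$ a.e.\ where $f \ne 0$, i.e.\ on $M\setminus Z$ up to a null set. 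Weak lower semicontinuity of $G_g$ (Proposition \ref{propSemiContinuity}) then gives $G_g(\psi_\infty) \le \limsup_k G_g(\psi_k) \le 0$. Combined with $\psi_\infty \in \cF(Z)$ and $\cY_g(Z) > 0$ (equivalently $\lambda_g(Z) > 0$ by Proposition \ref{propYamabe}), this forces $\psi_\infty \equiv 0$: indeed if $\psi_\infty \ne 0$ then $\lambda_g(Z) \le G_g(\psi_\infty)/\|\psi_\infty\|_{L^2}^2 \le 0$, a contradiction.

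**Reaching the contradiction.**
So $\psi_\infty = 0$, meaning $\psi_k \to 0$ strongly in $L^2$ (and in all $L^q$, $q<N$). Now I need to contradict $\|\psi_k\|_{W^{1,2}} = 1$. The point is that $\|\psi_k\|_{W^{1,2}}^2 = \|d\psi_k\|_{L^2}^2 + \|\psi_k\|_{L^2}^2$, and since $\|\psi_k\|_{L^2} \to 0$ we must have $\|d\psi_k\|_{L^2}^2 \to 1$. But from $G_g(\psi_k) \le o(1)$ and the H\"older/Young/Sobolev estimate used in the proof of Lemma \ref{lmEigenvalue} (bounding $\int \scal\,\psi_k^2$ by a small multiple of $\|\psi_k\|_{W^{1,2}}^2$ plus a large multiple of $\|\psi_k\|_{L^2}^2$), one gets
\[
 o(1) \;\ge\; G_g(\psi_k) \;\ge\; \tfrac{2(n-1)}{n-2}\int_M |d\psi_k|^2 d\mu^g \;-\; C\|\psi_k\|_{L^2}^2,
\]
so $\int_M |d\psi_k|^2 d\mu^g \to 0$, contradicting $\|d\psi_k\|_{L^2}^2 \to 1$. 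This completes the argument.

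**Main obstacle.** I expect the delicate point to be the passage $\int_M f|\psi_k|^N \to 0 \Rightarrow \psi_\infty \in \cF(Z)$: one cannot simply pass to the limit in $\int f|\psi_k|^N$ since $N$ is the critical Sobolev exponent and the embedding $W^{1,2}\hookrightarrow L^N$ is not compact, so I must lean on pointwise a.e.\ convergence (along a subsequence) together with Fatou's lemma, using only that $f \ge 0$ and $f \not\equiv 0$, rather than any strong $L^N$ control. Everything else is a rescaling bookkeeping exercise plus the already-established semicontinuity and sign facts about $\cY_g$ and $\lambda_g$.
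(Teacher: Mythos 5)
Your argument is correct, but it takes a genuinely different route from the paper's. The paper normalizes the unbounded sequence by the subcritical norm $\|u_k\|_{L^q}$ (with $q<N$ as in Proposition \ref{propSemiContinuity}), so that strong $L^q$ convergence makes the limit $v$ \emph{nonzero}; it then shows $v\in\cF(Z)$ not by Fatou but by a reverse-H\"older trick (if $v\geq\epsilon$ on a set $W\subset M\setminus Z$ of positive measure, then $\int f v_k^N$ stays bounded below, contradicting $B\geq\gamma_k^N(\int f v_k^N+o(1))$), and concludes directly from $\cY_g(Z)>0$ that $G_g(v)\geq \cY_g(Z)\|v\|_{L^N}^2>0$, whence $F(u_k)\geq\gamma_k^2G_g(v_k)\to\infty$. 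You instead normalize by $\|\phi_k\|_{W^{1,2}}$, show the weak limit $\psi_\infty$ lies in $\cF(Z)$ via a.e.\ convergence and Fatou, force $\psi_\infty\equiv 0$ using $\lambda_g(Z)>0$ (through Proposition \ref{propYamabe}, which the paper's proof does not need), and then contradict $\|\psi_k\|_{W^{1,2}}=1$ with the G\aa rding-type inequality $G_g(u)\geq\frac{2(n-1)}{n-2}\|du\|_{L^2}^2-C\|u\|_{L^2}^2$ from the proof of Lemma \ref{lmEigenvalue}. Your route buys a cleaner membership argument for $\cF(Z)$ (Fatou needs only $f\geq0$, not the continuity of $u\mapsto\int f|u|^q$ that the paper gets from $f\in L^\infty$), at the price of invoking Proposition \ref{propYamabe} and the G\aa rding estimate; the paper's choice of normalization avoids the whole ``limit is zero'' scenario by construction.

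One small point you should spell out: from $\frac{C_0}{t_k^2}\geq G_g(\psi_k)+\frac{2}{N}t_k^{N-2}\int_M f|\psi_k|^N d\mu^g$ alone you cannot read off $\int_M f|\psi_k|^N d\mu^g\to0$, since a priori $G_g(\psi_k)$ could be very negative; you need the uniform lower bound $G_g(\psi_k)\geq -C$, which does hold because $\|\psi_k\|_{W^{1,2}}=1$ and $G_g$ is a bounded quadratic form on $W^{1,2}$ (H\"older and Sobolev with $\scal\in L^p$, $p>n/2$, as in Proposition \ref{propSemiContinuity}). With that line added, the argument is complete.
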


\begin{proof}
We assume, by contradiction, that there exists a constant $B > 0$ and a
sequence of elements $u_k \in W^{1, 2}$ such that, for all $k$,
$F(u_k) \leq B$ while $\|u_k\|_{W^{1, 2}} \to \infty$.

We first remark that $F(|u_k|) = F(u_k)$ so, upon replacing $u_k$ by
$|u_k|$, we can suppose that $u_k \geq 0$. Let $q$ be as in the proof of
Proposition \ref{propSemiContinuity}. We have
\begin{align*}
\frac{4(n-1)}{n-2} \|u_k\|_{W^{1, 2}}^2
 &\leq F(u_k) + \int_M \left(\frac{4(n-1)}{n-2} + \scal\right) u_k^2 d\mu^g\\
 &\leq B + \left(\vol_g(M)^{1-2/q} + \left\|\scal\right\|_{L^p}\right) \|u_k\|_{L^q}^2.
\end{align*}
This proves that $\|u_k\|_{L^q} \to \infty$ and that
$\|u_k\|_{W^{1, 2}} \lesssim \|u_k\|_{L^q}$. We set
$\gamma_k = \|u_k\|_{L^q}$ and $v_k \definedas \gamma_k^{-1} u_k$ so that
the sequence $(v_k)_k$ is bounded in $W^{1, 2}$ and satisfies
$\|v_k\|_{L^q} = 1$. We can assume, without loss of generality, that
$v_k$ converges weakly in $W^{1, 2}$ and strongly in $L^q$ to some
$v \in W^{1, 2}$. Since $\|v\|_{L^q} = 1$, we have $v \not\equiv 0$.

We now claim that $v \in \cF(Z)$. Indeed, we have
\begin{equation}\label{eqSupport}
 B \geq F(u_k) = \gamma_k^2 G(v_k) + \gamma_k^N \int_M f v_k^N d\mu^g = \gamma_k^N \left(\int_M f v_k^N d\mu^g + o(1)\right).
\end{equation}
If we were able to prove that
\begin{equation}\label{eqCv}
 \int_M f v_k^N d\mu^g \to \int_M f v^N d\mu^g,
\end{equation}
we would immediately conclude that
\[
 \int_M f v^N d\mu^g = 0.
\]
Yet, convergence of $(v_k)_k$ to $v$ is so weak that proving that
\eqref{eqCv} (if true) holds is delicate. We bypass this issue by the
following argument. Assume, by contradiction, that $v \not\in \cF(Z)$,
then there exist a set $W \subset M\setminus Z$ with positive measure
and an $\epsilon > 0$ such that $v \geq \epsilon \mathbbm{1}_W$ a.e. (here
$\mathbbm{1}_W$ is the indicator function of $W$). Then,
\[
 \int_M f v^q d\mu^g \geq \epsilon^q \int_M f \mathbbm{1}_W^q d\mu^g = \epsilon^q \int_W f d\mu^g > 0.
\]
As a consequence, we have, for $k$ large enough,
\[
 \int_M f v_k^q d\mu^g \geq \frac{\epsilon^q}{2} \int_W f d\mu^g.
\]
From H\"older's inequality, we have
\[
 \left(\int_M f v_k^N d\mu^g\right)^{q/N} \left(\int_M f d\mu^g\right)^{1-q/N} \geq \int_M f v_k^q d\mu^g \geq \frac{\epsilon^q}{2} \int_W f d\mu^g.
\]
This shows that
$\displaystyle
\int_M f v_k^N d\mu^g
$
is bounded from below by a positive constant. This yields a contradiction
with \eqref{eqSupport}. As a consequence, we have $v \in \cF(Z)$.

Due to our assumption on $Z$, we have
$G_g(v) \geq \cY_g(V) \|v\|_{L^N}^{2/N} > 0$. So
\[
 \liminf_{k\to \infty} G_g(v_k) \geq G_g(v) > 0.
\]
In particular, we have
\[
 \liminf_{k\to \infty} F(u_k) \geq \liminf_{k\to \infty} G_g(u_k) = \liminf_{k\to \infty} \gamma_k^2 G_g(v_k) = \infty.
\]
This contradicts the assumption $F(u_k) \leq B$. 
\end{proof}

We have now all the ingredients to conclude that $F$ admits a minimizer
$\phi$. Since $F(|\phi|) = F(\phi)$, we can assume, without loss of
generality, that $\phi \geq 0$. $\phi$ is then a solution in a weak sense
to \eqref{eqPSC}. By elliptic regularity, we conclude that $\phi \in W^{2, p}$
and by Harnack's inequality that $\phi > 0$ provided $\phi\not\equiv 0$.

We rule out the possibility that $\phi \equiv 0$ as follows. Since
$\cY_g(M) < 0$, there exists $w \in W^{1, 2}$ such that $G_g(w) < 0$. For
any $\lambda > 0$ we have
\[
 F(\lambda w) = \lambda^2 G_g(w) + \lambda^N I(w).
\]
In particular, if $\lambda$ is small enough we have $F(\lambda w) < 0$.
This shows that the zero function is not a global mimimum of $F$. This
forces $\phi \not\equiv 0$.

Uniqueness of $\phi$ is obtained by applying Proposition
\ref{propUniqueness}.

We now need to get rid of the assumption $f \in L^\infty$. For all
$k>0$, we set $f_k \definedas \min\{f, k\} \in L^\infty$. Let $\phi_k$
denote the solution to \eqref{eqPSC} with $f$ replaced by $f_k$. Note
that the zero set of $f_k$ is the same as that of $f$ so the preceding
construction applies.
It follows from the maximum principle that $\phi_{k+1} \leq \phi_k$ for
all $k > 0$ (the argument is similar to the one in the proof of
Proposition \ref{propUniqueness}). Since
$\phi_1 \in W^{2, p} \subset L^\infty$, the sequence $f_k \phi_k^{N-1}$
is uniformly bounded in $L^p$. Hence, from elliptic regularity,
the sequence $(\phi_k)_k$ is bounded in $W^{2, p}$. By the compactness
of the embedding $W^{2, p} \hookrightarrow L^\infty$ together with
elliptic regularity, there exists a subsequence $(\phi_{\theta(k)})_k$ of
$(\phi_k)_k$ that converges to some $\phi \in W^{2, p}$, $\phi \geq 0$
solving \eqref{eqPSC}. Note that, from Dini's theorem, $(\phi_k)_k$ converges
in $L^\infty$ to $\phi$. All we need to do is to exclude that $\phi \equiv 0$.

This can be done as follows. Let $w \in W^{1, 2}$ be, as before, such
that $G_g(w) < 0$. Since $W^{2, p}$ is dense in $W^{1, 2}$, we can assume
that $w \in W^{2, p} \subset L^\infty$. As before, considering
$u = \lambda w$ in the functional \eqref{eqDefFunc}, we get existence of
$v$ such that $F(v) < 0$. Set
\[
 F_k(u) \definedas \int_M \left[\frac{4(n-1)}{n-2} |d\phi|^2 + \scal~\phi^2 + \frac{2}{N} f_k |\phi|^N\right]d\mu^g
\]
So we have $F_k(\phi_k) \leq F_k(v) \leq F(v) < 0$. Now remark that
$F_k(\phi_k) \to_{k \to \infty} F(\phi)$. This forces $F(\phi) \leq F(v) < 0$
which shows that $\phi \not\equiv 0$. By construction $\phi \geq 0$ and
from Harnack's inequality, we have $\phi > 0$. This ends the proof of
Theorem \ref{thmPSC}. Uniqueness is obtained from Proposition
\ref{propUniqueness}.

\providecommand{\bysame}{\leavevmode\hbox to3em{\hrulefill}\thinspace}
\providecommand{\MR}{\relax\ifhmode\unskip\space\fi MR }
\providecommand{\MRhref}[2]{%
  \href{http://www.ams.org/mathscinet-getitem?mr=#1}{#2}
}
\providecommand{\href}[2]{#2}

\end{document}